\pgfplotsset{compat=1.16}
\tiny\color{gray},
\tikzset{every picture/.style={line width=0.75pt}}
\pgfplotsset{
    every axis/.append style={
        line width=0.5pt,
        tick style={line width=0.5pt}
    }
}
\begin{document}

\title{Computing Supported Models via Transformation to Stable Models}

\author{Fang Li\inst{1} \and Gopal Gupta\inst{2}}

\institute{Oklahoma Christian University \and
University of Texas at Dallas}
\maketitle

\begin{abstract}
Answer Set Programming (ASP) with stable model semantics has proven highly effective for knowledge representation and reasoning. However, the minimality requirement of stable models can be restrictive for applications requiring exploration of non-minimal but logically consistent solution spaces. Supported models, introduced by Apt, Blair, and Walker in 1988, relax this minimality constraint while maintaining a support condition ensuring every true atom is justified by some rule.

Despite their theoretical significance, supported models lack practical computational tools integrated with modern ASP solvers. We present a novel transformation-based method enabling computation of supported models using standard ASP infrastructure. Our approach transforms any ground logic program into an equivalent program whose stable models correspond exactly to the supported models of the original program. We implement this transformation for Clingo, providing a practical tool for computing supported models with state-of-the-art ASP solvers.

We demonstrate applications in software verification, medical diagnosis, and planning where supported models enable valuable exploratory reasoning capabilities beyond those provided by stable models. Our implementation is publicly available and compatible with standard ASP syntax.

\keywords{Answer Set Programming \and Supported Models \and Program Transformation \and Knowledge Representation \and Exploratory Reasoning \and Clark's Completion}
\end{abstract}

\section{Introduction}

Answer Set Programming (ASP) is a powerful declarative paradigm for solving complex combinatorial search and reasoning problems \cite{baral2003knowledge}. Its foundation in stable model semantics \cite{gelfond1988stable} identifies solutions as minimal models satisfying program rules, which is effective for conservative reasoning. However, modern applications often demand more flexible reasoning capabilities \cite{faber2011answer}, where non-minimal interpretations can be valuable.

\subsection{Motivation}

The strict minimality of stable model semantics, while theoretically well-founded, can limit the discovery of potentially useful solutions. Consider a program representing possible states or outcomes; stable models capture only the most minimal scenarios, potentially missing richer, non-minimal states that are still consistent with the program's rules. This limitation is particularly relevant in domains such as \textbf{Medical Diagnosis} (where multiple co-occurring conditions might consistently explain symptoms, offering comprehensive differential diagnosis), \textbf{Planning} (where non-minimal plans could offer greater robustness via redundant steps), and \textbf{Software Verification} (where non-minimal test scenarios reflect real-world interaction combinations).

Supported models, introduced by Apt, Blair, and Walker \cite{apt1988towards}, address this limitation by relaxing the minimality requirement while preserving a crucial property: every true atom must be \emph{supported} by some rule whose body is satisfied. This ensures logical consistency while permitting non-minimal models. Marek and Subrahmanian \cite{marek1992relationship} established that every stable model is a supported model, but not conversely.

Despite their theoretical importance, supported models have seen limited practical adoption because existing computational methods rely on specialized fixpoint interpreters or established transformations to propositional logic (Clark's Completion and SAT solving). Current ASP solvers like Clingo \cite{gebser2014clingo} are highly optimized for stable models but provide no direct, integrated support for computing supported models.

\subsection{Our Contribution}

We present a transformation-based approach bridging this gap by enabling computation of supported models using standard stable model solvers. Our key contributions are:

\begin{enumerate}
    \item \textbf{Transformation method}: A polynomial-time transformation converting any ground logic program $\Pi$ into program $\Pi'$ such that stable models of $\Pi'$ correspond exactly to supported models of $\Pi$.
    
    \item \textbf{Theoretical results}: Proof of transformation correctness establishing a bijection between supported models of the original program and stable models of the transformed program.
    
    \item \textbf{Practical implementation}: Implementation for Clingo providing a practical tool for computing supported models integrated with modern ASP infrastructure.
    
    \item \textbf{Application examples}: Demonstrations in software verification, medical diagnosis, and planning where supported models enable exploration of solution spaces that stable models cannot capture.
\end{enumerate}

The remainder of this paper is organized as follows. Section 2 provides background on ASP and supported models. Section 3 presents our transformation method with correctness proofs. Section 4 describes the implementation and algorithm details. Section 5 discusses the relationship to Clark's Completion and the grounding restriction. Section 6 provides application examples. Section 7 reviews related work, and Section 8 concludes.

\section{Background}

\subsection{Answer Set Programming Fundamentals}

An ASP program $\Pi$ consists of rules of the form:
\[h \leftarrow b_1, \ldots, b_m, \textnormal{not}\; c_1, \ldots, \textnormal{not}\; c_n\]
where $h, b_i, c_j$ are atoms and $\textnormal{not}$ represents negation as failure. The set of atoms in a program $\Pi$ is denoted by $\textnormal{atoms}(\Pi)$. 

For a rule $r$, we denote:
\begin{itemize}
    \item $\textnormal{head}(r) = h$
    \item $\textnormal{body}^+(r) = \{b_1, \ldots, b_m\}$ (positive body literals)
    \item $\textnormal{body}^-(r) = \{c_1, \ldots, c_n\}$ (negated body literals)
\end{itemize}

A set of atoms $M$ is a \emph{model} of a rule if $h \in M$ whenever $\{b_1, \ldots, b_m\} \subseteq M$ and $\{c_1, \ldots, c_n\} \cap M = \emptyset$. $M$ is a model of $\Pi$ if it is a model of every rule in $\Pi$.

\subsection{Stable Model Semantics}

The stable model semantics \cite{gelfond1988stable} is the foundation of ASP. For a ground program $\Pi$ and a set of atoms $M$, the Gelfond-Lifschitz reduct $\Pi^M$ is obtained by:
\begin{enumerate}
    \item Removing from $\Pi$ each rule that has a negative literal $\textnormal{not}\; c$ in its body with $c \in M$, and
    \item Removing all negative literals $\textnormal{not}\; c$ from the bodies of the remaining rules.
\end{enumerate}

$\Pi^M$ is a positive program. A set $M$ is a \emph{stable model} of $\Pi$ if and only if $M$ is the minimal model of $\Pi^M$.

\subsection{Supported Models}

We now recall the definition of supported models from Apt, Blair, and Walker \cite{apt1988towards}.

\begin{definition}[Supported Model]\label{def:supported}
Given a ground logic program $\Pi$, a set of atoms $M$ is a \emph{supported model} of $\Pi$ if:
\begin{enumerate}
    \item $M$ is a model of $\Pi$ (i.e., $M$ satisfies all rules in $\Pi$), and
    \item For each atom $a \in M$, there exists a rule $r \in \Pi$ such that:
        \begin{itemize}
            \item $\textnormal{head}(r) = a$
            \item $\textnormal{body}^+(r) \subseteq M$
            \item $\textnormal{body}^-(r) \cap M = \emptyset$
        \end{itemize}
\end{enumerate}
We say such a rule $r$ \emph{supports} atom $a$ in $M$.
\end{definition}

For non-ground programs, this definition applies to the ground instantiation over the Herbrand universe.

\begin{theorem}[Relationship to Stable Models \cite{marek1992relationship}]\label{thm:stable-supported}
Every stable model of a logic program is a supported model. The converse does not hold in general.
\end{theorem}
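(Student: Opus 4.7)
The plan is to prove the two assertions separately: first the inclusion (every stable model is supported), then non-equivalence via a small counterexample.

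For the forward direction, let $M$ be a stable model of $\Pi$, so $M$ equals the minimal model of the reduct $\Pi^M$. I first show $M$ is a model of $\Pi$ in the sense of the definition in Section 2.1. Take any rule $r \in \Pi$ whose body is satisfied by $M$, i.e.\ $\textnormal{body}^+(r) \subseteq M$ and $\textnormal{body}^-(r) \cap M = \emptyset$. The second condition guarantees that $r$ survives step~1 of the reduct, and step~2 strips its negative literals, so the positive rule $\textnormal{head}(r) \leftarrow \textnormal{body}^+(r)$ belongs to $\Pi^M$. Since $M$ satisfies $\Pi^M$ and $\textnormal{body}^+(r) \subseteq M$, it follows that $\textnormal{head}(r) \in M$. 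Hence $M$ satisfies every rule of $\Pi$.

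Next I verify the support clause. Fix $a \in M$. Since $M$ is the \emph{minimal} model of the positive program $\Pi^M$, every atom in $M$ must appear as the head of some rule in $\Pi^M$ whose body is contained in $M$ (otherwise $M \setminus \{a\}$ would still be a model of $\Pi^M$, contradicting minimality; formally this is the standard characterization of least models of positive programs as the least fixpoint of the immediate-consequence operator $T_{\Pi^M}$). So there is a rule $a \leftarrow b_1,\ldots,b_m$ in $\Pi^M$ with $\{b_1,\ldots,b_m\} \subseteq M$. By construction of the reduct, this rule originates from some $r \in \Pi$ of the form $a \leftarrow b_1,\ldots,b_m, \textnormal{not}\; c_1, \ldots, \textnormal{not}\; c_n$ with $\{c_1,\ldots,c_n\} \cap M = \emptyset$. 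That rule $r$ supports $a$ in $M$ in the sense of Definition~\ref{def:supported}, completing the forward direction.

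For the strictness of the inclusion, I would exhibit the canonical one-rule program $\Pi = \{\,p \leftarrow p\,\}$. The set $M = \{p\}$ is a model of $\Pi$, and $p$ is supported in $M$ by the sole rule (its positive body $\{p\}$ is in $M$ and its negative body is empty), so $M$ is a supported model. However, $\Pi^M = \{\,p \leftarrow p\,\}$ has minimal model $\emptyset \neq M$, so $M$ is not stable. (The only stable model of $\Pi$ is $\emptyset$.) This separates the two notions.

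The proof is essentially routine once the reduct mechanics are unpacked; the only mildly delicate step is the appeal to minimality to extract a supporting rule in $\Pi^M$, which I would state by invoking the least-fixpoint characterization of minimal models of positive programs rather than reprove it here.
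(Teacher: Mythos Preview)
Your proof is correct. The forward direction is handled cleanly; in particular, your parenthetical minimality argument (that $M \setminus \{a\}$ would remain a model of $\Pi^M$) is actually valid on its own, since any rule of the positive reduct with body contained in $M \setminus \{a\} \subseteq M$ must have its head in $M$, and that head cannot be $a$ by hypothesis. The fixpoint justification you add is therefore redundant but harmless.

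As for comparison with the paper: the paper does not prove this theorem at all. It is stated with a citation to Marek and Subrahmanian and left unproved; the only thing the paper supplies is Example~\ref{ex:supported-not-stable}, which witnesses the failure of the converse using the two-rule program $p \leftarrow q,\ \textnormal{not}\; r$ and $q \leftarrow p$ with $M=\{p,q\}$. Your counterexample $\{\,p \leftarrow p\,\}$ with $M=\{p\}$ is simpler and more canonical, isolating the self-supporting positive loop that is the essential reason supported models can outrun stable ones. Either example suffices; yours is arguably the cleaner choice.
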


\begin{example}\label{ex:supported-not-stable}
Consider the program:
\begin{align*}
p &\leftarrow q, \textnormal{not}\; r \\
q &\leftarrow p
\end{align*}

The empty set $\emptyset$ is both a stable model and a supported model.

The set $\{p, q\}$ is a supported model but not a stable model:
\begin{itemize}
    \item $\{p,q\}$ is a model (both rules satisfied)
    \item $p$ is supported by the first rule (since $q \in \{p,q\}$ and $r \notin \{p,q\}$)
    \item $q$ is supported by the second rule (since $p \in \{p,q\}$)
    \item But $\{p,q\}$ is not stable: the reduct w.r.t.\ $\{p,q\}$ has least model $\emptyset$, not $\{p,q\}$
\end{itemize}
\end{example}

\section{Transformation Method}

We now present our main contribution: a transformation enabling computation of supported models via stable model solvers.

\subsection{The Transformation}

The key insight is to decouple conjunctive rule bodies using auxiliary atoms representing ``the body is disabled.'' For a rule $r$, we introduce auxiliary atom $\_dm\_r$ (mnemonic: ``De Morgan for rule $r$'') that is true exactly when the body of $r$ is false.

\begin{definition}[Transformation]\label{def:transformation}
Let $\Pi$ be a ground logic program. The \emph{transformed program} $T(\Pi)$ is constructed as follows. For each rule $r$ in $\Pi$ of the form:
\[
h \leftarrow l_1, \ldots, l_m, \textnormal{not}\; l_{m+1}, \ldots, \textnormal{not}\; l_n
\]
we generate the following rules in $T(\Pi)$:
\begin{align}
h &\leftarrow \textnormal{not}\; \_dm\_r \label{eq:t1} \\
\_dm\_r &\leftarrow \textnormal{not}\; l_i \quad \textnormal{for each } i \in \{1, \ldots, m\} \label{eq:t2} \\
\_dm\_r &\leftarrow l_j \quad \textnormal{for each } j \in \{m+1, \ldots, n\} \label{eq:t3}
\end{align}
\end{definition}

\begin{remark}
The atom $\_dm\_r$ is true if and only if the body of rule $r$ is false. Rules \eqref{eq:t2} and \eqref{eq:t3} implement De Morgan's law:
\[
\neg(l_1 \land \cdots \land l_m \land \neg l_{m+1} \land \cdots \land \neg l_n) \equiv \neg l_1 \lor \cdots \lor \neg l_m \lor l_{m+1} \lor \cdots \lor l_n
\]
This transformation is structurally related to \textbf{Clark's Completion} in capturing the supported model property. However, our method directly leverages the stable model semantics of a modern ASP solver for computation. We discuss this relationship in detail in Section~\ref{sec:clarks}.
\end{remark}

\begin{example}\label{ex:transformation}
Consider the rule:
\[
p \leftarrow q, r, \textnormal{not}\; s
\]
The transformation produces:
\begin{align*}
p &\leftarrow \textnormal{not}\; \_dm\_r \\
\_dm\_r &\leftarrow \textnormal{not}\; q \\
\_dm\_r &\leftarrow \textnormal{not}\; r \\
\_dm\_r &\leftarrow s
\end{align*}
\end{example}

\subsection{Correctness}

We now prove that our transformation correctly captures supported models.

\begin{theorem}[Transformation Correctness]\label{thm:transformation}
Let $\Pi$ be a ground logic program. An interpretation $M$ is a supported model of $\Pi$ if and only if 
\[
M' = M \cup \{\_dm\_r \mid r \in \Pi, \textnormal{ body of } r \textnormal{ is false in } M\}
\]
is a stable model of $T(\Pi)$.
\end{theorem}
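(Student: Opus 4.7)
The plan is to prove both directions of the biconditional by carefully analysing the Gelfond--Lifschitz reduct $T(\Pi)^{M'}$. A useful preliminary observation is that the transformation uses negation only on atoms of the form $\_dm\_r$ in rules of shape~\eqref{eq:t1}, only on regular atoms in rules of shape~\eqref{eq:t2}, and not at all in rules of shape~\eqref{eq:t3}. Consequently, in $T(\Pi)^{M'}$ the rules of type~\eqref{eq:t1} collapse to facts $h$ whenever $\_dm\_r \notin M'$; rules of type~\eqref{eq:t2} collapse to facts $\_dm\_r$ whenever $l_i \notin M'$; and rules of type~\eqref{eq:t3} are kept verbatim as $\_dm\_r \leftarrow l_j$. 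This layered structure is crucial: regular atoms in the minimal model of the reduct are fixed by the fact-rules alone, independently of the $\_dm\_r$ atoms.

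For the forward direction, assume $M$ is a supported model of $\Pi$ and let $N$ denote the minimal model of $T(\Pi)^{M'}$; the goal is to show $N = M'$. I first compute $N \cap \textnormal{atoms}(\Pi)$. For $a \in M$, the support condition furnishes a rule $r$ with head $a$ whose body is true in $M$, so $\_dm\_r \notin M'$ by construction of $M'$, and the corresponding reduct fact places $a$ in $N$. Conversely, if $a \notin M$, then since $M$ is a model of $\Pi$ every rule with head $a$ has false body in $M$, so the corresponding $\_dm\_r$ lies in $M'$ and every relevant fact-rule is deleted, giving $a \notin N$. Hence $N \cap \textnormal{atoms}(\Pi) = M$. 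I then analyse the $\_dm\_r$-part: if the body of $r$ is true in $M$, all clauses~\eqref{eq:t2} for $r$ are deleted and clauses~\eqref{eq:t3} cannot fire (since the required $l_j$ is not in $M = N \cap \textnormal{atoms}(\Pi)$), so $\_dm\_r \notin N$; if the body of $r$ is false in $M$, some clause of~\eqref{eq:t2} or~\eqref{eq:t3} fires and $\_dm\_r \in N$. Combining, $N = M'$.

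For the reverse direction, assume $M'$ is a stable model of $T(\Pi)$ and set $M = M' \cap \textnormal{atoms}(\Pi)$. The central step is the equivalence $\_dm\_r \in M'$ iff the body of $r$ is false in $M$, established by the same reduct analysis run as a characterisation: $\_dm\_r$ is derivable in $T(\Pi)^{M'}$ precisely when some clause~\eqref{eq:t2} or~\eqref{eq:t3} for $r$ fires, which happens iff some $l_i \in \textnormal{body}^+(r)$ is absent from $M$ or some $l_j \in \textnormal{body}^-(r)$ is present in $M$ --- exactly the De Morgan condition that the body of $r$ is false in $M$. This yields the claimed shape of $M'$. It then remains to verify that $M$ is a supported model of $\Pi$: if $r \in \Pi$ has body true in $M$, then $\_dm\_r \notin M'$ and the reduct fact $h$ forces $h \in M$, so $M$ satisfies $r$; and if $a \in M$, minimality of $M'$ requires a reduct derivation of $a$, which can only come via a rule $a \leftarrow \textnormal{not}\; \_dm\_r$ with $\_dm\_r \notin M'$, i.e., from a rule of $\Pi$ with head $a$ whose body is true in $M$ --- exactly the support condition.

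The main obstacle I anticipate is avoiding a circular appeal in the equivalence $\_dm\_r \in M'$ iff the body of $r$ is false in $M$: clauses of type~\eqref{eq:t3} make $\_dm\_r$ depend on some $l_j \in M'$, which is itself a regular atom whose status in $M'$ is precisely what the theorem is trying to pin down. I would resolve this by adhering to the stratified computation noted above --- fixing the regular atoms of the reduct's minimal model first from the fact-rules alone, and only then analysing the $\_dm\_r$ layer --- so that each step of the argument rests on information already justified.
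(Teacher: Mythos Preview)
Your proposal is correct and follows essentially the same reduct-based strategy as the paper: both directions proceed by analysing which rules of $T(\Pi)$ survive in the reduct $(T(\Pi))^{M'}$ and what its least model must contain. The only notable difference is that in the backward direction the paper invokes Theorem~\ref{thm:stable-supported} (stable $\Rightarrow$ supported, applied to $T(\Pi)$) to obtain a supporting rule for each $a \in M$, whereas you argue directly from minimality of $M'$ in the reduct --- your route is slightly more self-contained, but the underlying computation is the same.
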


\begin{proof}
We prove both directions.

\medskip
\noindent\textbf{($\Rightarrow$) Supported model to stable model:}

Let $M$ be a supported model of $\Pi$. Define:
\[
M' = M \cup \{\_dm\_r \mid r \in \Pi, M \not\models \textnormal{body}(r)\}
\]

We show $M'$ is a stable model of $T(\Pi)$ by proving it equals the least model of the reduct $(T(\Pi))^{M'}$.

\emph{Step 1: $M'$ is a model of $T(\Pi)$.}

Consider any rule in $T(\Pi)$:

\textbf{Case 1:} Rule $h \leftarrow \textnormal{not}\; \_dm\_r$ from some $r: h \leftarrow \textnormal{body}(r)$ in $\Pi$.

If $\_dm\_r \in M'$, the rule is satisfied. If $\_dm\_r \notin M'$, then by construction, $M \models \textnormal{body}(r)$. Since $M$ is a model of $\Pi$ and $M \models \textnormal{body}(r)$, we have $h \in M \subseteq M'$, so the rule is satisfied.

\textbf{Case 2:} Rule $\_dm\_r \leftarrow \textnormal{not}\; l_i$ for $l_i \in \textnormal{body}^+(r)$.

If $l_i \in M'$, the rule is satisfied. If $l_i \notin M'$, then $l_i \notin M$, so $M \not\models \textnormal{body}(r)$, hence $\_dm\_r \in M'$ by construction, so the rule is satisfied.

\textbf{Case 3:} Rule $\_dm\_r \leftarrow l_j$ for $l_j \in \textnormal{body}^-(r)$.

If $l_j \notin M'$, the rule is satisfied. If $l_j \in M'$, then $l_j \in M$, so $M \not\models \textnormal{body}(r)$, hence $\_dm\_r \in M'$ by construction, so the rule is satisfied.

Thus $M'$ is a model of $T(\Pi)$.

\emph{Step 2: $M'$ is the minimal model of $(T(\Pi))^{M'}$.}

The reduct $(T(\Pi))^{M'}$ contains:
\begin{itemize}
\item For each $r$ with $M \models \textnormal{body}(r)$: $\_dm\_r \notin M'$, so rule $h \leftarrow \textnormal{not}\; \_dm\_r$ becomes fact $h \leftarrow$ in the reduct.

\item For each $r$ with $M \not\models \textnormal{body}(r)$: $\_dm\_r \in M'$, so rule $h \leftarrow \textnormal{not}\; \_dm\_r$ is removed from the reduct.

\item For $\_dm\_r$ rules: If $l_i \notin M'$ and $\_dm\_r \in M'$, rule $\_dm\_r \leftarrow \textnormal{not}\; l_i$ becomes fact $\_dm\_r \leftarrow$ in reduct. Similarly for rules $\_dm\_r \leftarrow l_j$ when $l_j \in M'$ and $\_dm\_r \in M'$.
\end{itemize}

Let $N$ be the least model of $(T(\Pi))^{M'}$.

For any original atom $a \in \textnormal{atoms}(\Pi)$: $a \in N$ iff there exists rule $r$ with $\textnormal{head}(r) = a$ and $M \models \textnormal{body}(r)$. Since $M$ is a supported model, every $a \in M$ has such a supporting rule. Conversely, if $a \notin M$ but $a \in N$, then there exists $r$ with $M \models \textnormal{body}(r)$ and $\textnormal{head}(r) = a$, but since $M$ is a model, this implies $a \in M$, contradiction. Thus $N \cap \textnormal{atoms}(\Pi) = M$.

For any auxiliary atom $\_dm\_r$: $\_dm\_r \in N$ iff $M \not\models \textnormal{body}(r)$, which is exactly when $\_dm\_r \in M'$ by construction.

Therefore $N = M'$, so $M'$ is a stable model of $T(\Pi)$.

\medskip
\noindent\textbf{($\Leftarrow$) Stable model to supported model:}

Let $M'$ be a stable model of $T(\Pi)$. Define $M = M' \cap \textnormal{atoms}(\Pi)$.

\emph{Step 1: $M$ is a model of $\Pi$.}

Consider any rule $r: h \leftarrow \textnormal{body}(r)$ in $\Pi$. The transformed program contains $h \leftarrow \textnormal{not}\; \_dm\_r$.

Since $M'$ is a model of $T(\Pi)$, either $\_dm\_r \in M'$ or $h \in M'$.

If $\_dm\_r \in M'$, then the rules generating $\_dm\_r$ imply that $M \not\models \textnormal{body}(r)$, so $r$ is satisfied in $M$.

If $\_dm\_r \notin M'$, then $h \in M'$, hence $h \in M$. Moreover, $\_dm\_r \notin M'$ means no rule generating $\_dm\_r$ has its body satisfied, which means $M \models \textnormal{body}(r)$. Thus $r$ is satisfied in $M$.

Therefore $M$ is a model of $\Pi$.

\emph{Step 2: Every atom in $M$ has support.}

Let $a \in M \subseteq M'$. Since $M'$ is a stable model, it is a supported model of $T(\Pi)$ (by Theorem~\ref{thm:stable-supported}). Thus $a$ has a supporting rule in $T(\Pi)$.

The only rules in $T(\Pi)$ with head $a$ are of the form $a \leftarrow \textnormal{not}\; \_dm\_r$ for some rule $r$ with $\textnormal{head}(r) = a$ in $\Pi$.

For such a rule to support $a$ in $M'$, we need $\_dm\_r \notin M'$. This means $M \models \textnormal{body}(r)$.

Therefore $a$ has a supporting rule $r$ in $\Pi$, so $M$ is a supported model of $\Pi$.
\end{proof}

\subsection{Complexity}

\begin{proposition}\label{prop:complexity}
For a ground program $\Pi$ with $n$ rules and $m$ literals total across all rule bodies, the transformed program $T(\Pi)$ has $O(n + m)$ rules and $O(n + |\textnormal{atoms}(\Pi)|)$ atoms.
\end{proposition}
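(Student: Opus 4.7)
The plan is to bound the sizes of $T(\Pi)$ by inspecting Definition~\ref{def:transformation} rule by rule and summing the contributions. This is essentially a counting argument, so the main task is to organise the count cleanly rather than to overcome a technical obstacle.

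First I would fix notation: for each original rule $r \in \Pi$, let $k_r = |\textnormal{body}^+(r)| + |\textnormal{body}^-(r)|$ denote the total number of body literals of $r$, so that $\sum_{r \in \Pi} k_r = m$ by hypothesis. Then I would read off from Definition~\ref{def:transformation} that $r$ contributes exactly three kinds of rules to $T(\Pi)$: one rule of type~\eqref{eq:t1} (namely $h \leftarrow \textnormal{not}\; \_dm\_r$), one rule of type~\eqref{eq:t2} per positive body literal of $r$, and one rule of type~\eqref{eq:t3} per negative body literal of $r$. Hence $r$ generates $1 + k_r$ rules in $T(\Pi)$.

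Summing over all $r \in \Pi$ gives the total rule count
\[
\sum_{r \in \Pi} (1 + k_r) \;=\; n + \sum_{r \in \Pi} k_r \;=\; n + m,
\]
which is $O(n+m)$ as claimed. For the atom count, I would observe that $T(\Pi)$ uses exactly the atoms in $\textnormal{atoms}(\Pi)$ together with one fresh auxiliary atom $\_dm\_r$ per rule $r \in \Pi$, for a total of $|\textnormal{atoms}(\Pi)| + n$ atoms, giving the bound $O(n + |\textnormal{atoms}(\Pi)|)$.

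Since every step is a direct application of the construction and a finite sum, there is no real obstacle; the only mild subtlety to flag is that the auxiliary atoms $\_dm\_r$ are genuinely distinct across rules (they are indexed by $r$), so no collapsing occurs and the count $n$ of new atoms is tight. I would briefly remark that the transformation is therefore linear in the size of $\Pi$, confirming the polynomial-time claim made in the introduction.
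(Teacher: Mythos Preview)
Your proposal is correct and follows essentially the same counting argument as the paper: per rule $r$ you generate $1 + k_r$ rules (one of type~\eqref{eq:t1} plus one per body literal), sum to $n + m$, and then observe that the atoms are exactly $\textnormal{atoms}(\Pi)$ together with $n$ fresh $\_dm\_r$ atoms. The paper's proof is identical in structure, just more terse; your added remark about the auxiliary atoms being pairwise distinct is a reasonable clarification but not something the paper dwells on.
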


\begin{proof}
For each rule $r$ in $\Pi$ with $k_r$ literals in its body, we generate 1 rule of form $\textnormal{head}(r) \leftarrow \textnormal{not}\; \_dm\_r$ plus $k_r$ rules for $\_dm\_r$.

Total rules: $n + \sum_{r \in \Pi} k_r = n + m = O(n + m)$.

Total atoms: original atoms plus $n$ new auxiliary atoms = $O(n + |\textnormal{atoms}(\Pi)|)$.
\end{proof}

\section{Implementation and Algorithm Details}

We have implemented our transformation as a preprocessor for Clingo, the state-of-the-art ASP solver.

\textbf{Note on Implementation Scope:} The current implementation focuses on transforming and solving \textit{grounded} ASP programs. The rationale for this restriction is discussed in Section~\ref{sec:grounding}.

\subsection{Algorithm}

\begin{algorithm}
\caption{Computing Supported Models}
\label{alg:supported}
\begin{algorithmic}[1]
\State \textbf{Input:} Ground logic program $\Pi$
\State \textbf{Output:} Set of supported models $\mathcal{M}$
\State $\mathcal{M} \gets \emptyset$
\State $\Pi' \gets \textsc{Transform}(\Pi)$ \Comment{Apply De Morgan's transformation as in Definition~\ref{def:transformation}}
\State $\mathcal{M}' \gets \textsc{Clingo}(\Pi')$ \Comment{Compute stable models of $\Pi'$ using the Clingo solver}
\For{each stable model $M' \in \mathcal{M}'$}
    \State $M \gets M' \cap \textnormal{atoms}(\Pi)$ \Comment{Project to original vocabulary}
    \If{$M \notin \mathcal{M}$}
        \State $\mathcal{M} \gets \mathcal{M} \cup \{M\}$ \Comment{Collect the unique supported model}
    \EndIf
\EndFor
\State \Return $\mathcal{M}$
\end{algorithmic}
\end{algorithm}

\medskip
\noindent\textbf{Detailed Algorithm Explanation}

Algorithm~\ref{alg:supported} outlines the process implemented in our preprocessor tool. The tool first receives a ground logic program $\Pi$ (Line 1). The central step is the call to $\textsc{Transform}(\Pi)$ (Line 4), which applies Definition~\ref{def:transformation} to generate the transformed program $\Pi'$. This function iterates through every rule $r$ in $\Pi$ and replaces it with one rule $h \leftarrow \textnormal{not}\; \_dm\_r$ (Equation~\ref{eq:t1}) and a set of rules (Equations~\ref{eq:t2}, \ref{eq:t3}) that define the auxiliary atom $\_dm\_r$ based on the failure of $r$'s body.

The resulting program $\Pi'$ is then passed to the standard, optimized $\textsc{Clingo}$ solver (Line 5) to compute its stable models $\mathcal{M}'$. As proven in Theorem~\ref{thm:transformation}, each stable model $M'$ of $\Pi'$ uniquely corresponds to a supported model of $\Pi$. The subsequent steps (Lines 6--10) project the auxiliary atoms out of the models in $\mathcal{M}'$ to obtain the supported models $M$ of the original program $\Pi$ and collect them in the final set $\mathcal{M}$. This projection step is essential for presenting the solution in the original program's vocabulary.

\subsection{Usage Example}

Given an input program:
\begin{verbatim}
p :- q, not r.
q :- p.
\end{verbatim}

Our tool generates the transformed program:
\begin{verbatim}
p :- not _dm_r1.
_dm_r1 :- not q.
_dm_r1 :- r.
q :- not _dm_r2.
_dm_r2 :- not p.
\end{verbatim}

Running Clingo computes stable models which, after projection, give supported models: $\emptyset$ and $\{p,q\}$.

\section{Discussion and Related Concepts}

\subsection{Relationship to Clark's Completion and SAT Solvers}
\label{sec:clarks}

The \textbf{supported models} of a logic program $\Pi$ coincide exactly with the \textbf{classical models} of its \textbf{Clark's Completion} (Comp($\Pi$)). Therefore, any tool that computes the models of Comp($\Pi$) can compute supported models. An established alternative approach is to use a dedicated system to generate the Clark's Completion and then use a highly-optimized SAT solver to find its models.

Our contribution is not in proposing a fundamentally new semantics, but in providing a new and practical method for computing supported models \textbf{within the ASP paradigm}. The transformed program $T(\Pi)$ is still an ASP program solved by a stable model solver.

\begin{itemize}
    \item \textbf{Decoupling vs.\ Minimality Enforcement:} Clark's Completion provides the \emph{model} property. ASP solvers then enforce the \emph{minimality} condition (via unfounded set checks) to find stable models. Our transformation $T(\Pi)$ \emph{decouples} these two concerns. The rules defining $\_dm\_r$ capture the supported condition, and the structure of the final rules $h \leftarrow \textnormal{not}\; \_dm\_r$ ensures that the stable model mechanism of Clingo correctly finds \emph{all} supported models of the original program $\Pi$, without inadvertently reintroducing the minimality restriction.
    \item \textbf{Practical Integration:} The key advantage is practical: our method requires only Clingo and an external preprocessor, enabling researchers already using the highly optimized Clingo ecosystem to access supported models seamlessly without requiring a separate SAT solver toolchain.
\end{itemize}

\subsection{The Grounding Restriction}
\label{sec:grounding}

The restriction to grounded logic programs is \textbf{implementation-based, not theoretical}. The transformation $\textsc{Transform}(\Pi)$ can be applied to any ground program $\Pi$. For non-ground programs $\Pi_{\mathit{ng}}$, the overall computation process is:
\[
\Pi_{\mathit{ng}} \xrightarrow{\textnormal{gringo}} \Pi \xrightarrow{\textsc{Transform}} \Pi' \xrightarrow{\textnormal{clasp}} \mathcal{M}' \xrightarrow{\textnormal{Projection}} \mathcal{M}
\]
The transformation itself could, in principle, be integrated into the grounding step. Our current preprocessor approach simply treats the transformation as an operation on the already-ground program $\Pi$. Future work will investigate integrating the transformation into the ASP system's front-end to handle non-ground programs transparently.

\section{Applications}

We demonstrate scenarios where supported models enable valuable exploratory reasoning beyond stable models.

\subsection{Software Verification: Comprehensive Test Generation}

In software testing, we often want to generate all possible system states consistent with component behaviors, not just minimal configurations.

\begin{example}[Redundant Systems Testing]\label{ex:backup}
Consider a system where components can be activated and interact:
\begin{verbatim}
% Components can be self-activating (representing external triggers)
primary :- primary.
backup :- backup.

% System behavior rules
running :- primary.
running :- backup.
redundant :- primary, backup.
\end{verbatim}

The self-referential rules model components that can be externally triggered or activated.

\textbf{Stable models:}
\begin{itemize}
\item $\emptyset$: No components active
\end{itemize}

\textbf{Additional supported models:}
\begin{itemize}
\item \texttt{\{primary, running\}}: Only primary active
\item \texttt{\{backup, running\}}: Only backup active  
\item \texttt{\{primary, backup, running, redundant\}}: Both active (redundant configuration)
\end{itemize}

The supported models capture all consistent system configurations, including the redundant state where both primary and backup are active. For comprehensive testing, verifying system behavior under all these configurations is valuable.
\end{example}

\subsection{Medical Diagnosis: Hypothesis Generation}

In medical diagnosis, physicians often consider multiple co-occurring conditions, not just the minimal set explaining symptoms.

\begin{example}[Differential Diagnosis]\label{ex:medical}
Consider a diagnostic scenario where conditions can be hypothesized:
\begin{verbatim}
% Conditions can be hypothesized (self-supporting for exploration)
infection :- infection.
inflammation :- inflammation.
allergy :- allergy.

% Symptom derivation
fever :- infection.
fever :- inflammation.
cough :- infection.
cough :- allergy.
fatigue :- infection.
fatigue :- inflammation.
\end{verbatim}

\textbf{Stable models:} Only $\emptyset$ (no conditions assumed).

\textbf{Supported models include:}
\begin{itemize}
\item \texttt{\{infection, fever, cough, fatigue\}}: Infection alone
\item \texttt{\{inflammation, fever, fatigue\}}: Inflammation alone
\item \texttt{\{allergy, cough\}}: Allergy alone
\item \texttt{\{infection, inflammation, fever, cough, fatigue\}}: Co-occurring infection and inflammation
\item \texttt{\{infection, allergy, fever, cough, fatigue\}}: Co-occurring infection and allergy
\item \texttt{\{inflammation, allergy, fever, cough, fatigue\}}: Co-occurring inflammation and allergy
\item \texttt{\{infection, inflammation, allergy, fever, cough, fatigue\}}: All three conditions
\end{itemize}

These represent differential diagnoses where multiple conditions may co-occur. A physician exploring all plausible hypotheses would benefit from this enumeration, rather than assuming only minimal explanations.
\end{example}

\subsection{Planning: Alternative Resource Allocations}

In resource allocation, we may want to explore all valid allocations, including those using more resources than strictly necessary.

\begin{example}[Server Allocation]\label{ex:server}
Consider allocating servers to tasks:
\begin{verbatim}
% Servers can be allocated (self-supporting choice)
server1 :- server1.
server2 :- server2.
server3 :- server3.

% Task completion conditions
taskA_done :- server1.
taskA_done :- server2.
taskB_done :- server2.
taskB_done :- server3.
all_done :- taskA_done, taskB_done.
\end{verbatim}

\textbf{Stable models:} Only $\emptyset$.

\textbf{Supported models include:}
\begin{itemize}
\item \texttt{\{server2, taskA\_done, taskB\_done, all\_done\}}: Minimal allocation
\item \texttt{\{server1, server2, taskA\_done, taskB\_done, all\_done\}}: Redundant for task A
\item \texttt{\{server2, server3, taskA\_done, taskB\_done, all\_done\}}: Redundant for task B
\item \texttt{\{server1, server2, server3, taskA\_done, taskB\_done, all\_done\}}: Maximum redundancy
\end{itemize}

Over-provisioned allocations may be preferred for reliability in production environments.
\end{example}

\subsection{When to Use Supported Models}

Based on our experience, supported models are valuable when:

\begin{itemize}
    \item \textbf{Exploration over optimization}: The goal is to enumerate possibilities, not find optimal solutions
    \item \textbf{Non-minimal scenarios matter}: Real-world situations often involve redundancy or co-occurrence
    \item \textbf{Hypothesis generation}: Generating candidate explanations for further investigation
    \item \textbf{Test coverage}: Ensuring systems are tested under all consistent scenarios
\end{itemize}

Stable models remain preferable when:
\begin{itemize}
    \item Minimality directly corresponds to optimality
    \item Occam's razor applies (prefer simpler explanations)
    \item The problem naturally has unique or minimal solutions
\end{itemize}

\section{Related Work}

\subsection{Supported Models Literature}

Supported models were introduced by Apt, Blair, and Walker \cite{apt1988towards} as an alternative semantics for logic programs. Apt and Bol \cite{apt1994logic} provided a comprehensive survey of logic programming with negation, establishing supported models as a relaxation of stable models.

Marek and Subrahmanian \cite{marek1992relationship} studied the relationship between various semantics, proving that stable models form a subset of supported models.

Previous computational approaches relied on fixpoint-based interpreters. Our work provides a transformation-based method enabling computation via standard stable model solvers.

\subsection{Alternative ASP Semantics}

\textbf{Well-founded semantics} \cite{gelder1991well} computes a unique three-valued model, providing polynomial-time reasoning but losing the flexibility of multiple answer sets.

\textbf{Partial stable models} \cite{przymusinski1991stable} extend stable models to three-valued logic, similar in spirit to supported models in relaxing completeness requirements.

\textbf{Coinductive ASP} \cite{gupta2012coinductive} extends standard ASP to capture non-minimal models, particularly for positive cycles, using greatest fixed-point semantics and requiring specialized solvers.

Our work is complementary to these approaches, focusing specifically on supported models and their practical computation using standard ASP infrastructure.

\subsection{Expressiveness of ASP Extensions}

\textbf{Choice rules} like \texttt{\{p\}} allow non-deterministic inclusion of atoms, but require explicit specification for each atom that should be non-minimal. Our transformation systematically generates all supported models without manual annotation.

\textbf{Disjunctive ASP} can encode alternatives, but requires knowing a priori which atoms should be treated non-minimally. Our approach handles this systematically.

Our transformation provides:
\begin{itemize}
\item Automatic generation of all supported models
\item Preservation of the declarative nature of the original program
\item Theoretical guarantees via Theorem~\ref{thm:transformation}
\end{itemize}

\section{Conclusion}

We presented a transformation-based method for computing supported models using standard ASP solvers. Our approach bridges the gap between theoretical semantics (supported models, defined in 1988) and practical tools (modern ASP solvers optimized for stable models).

The key contributions are:
\begin{enumerate}
\item A polynomial-time transformation with proven correctness
\item A practical implementation for Clingo
\item Demonstration of applications where supported models enable valuable exploratory reasoning
\end{enumerate}

\subsection{Future Work}

Promising directions for future work include:

\begin{itemize}
    \item \textbf{Transparent handling of variables}: Integrating the transformation directly within the ASP system's front-end to handle non-ground ASP programs transparently, eliminating the current implementation-based grounding restriction
    \item \textbf{Empirical evaluation}: Conducting a systematic performance comparison between our ASP-based transformation and the established Clark's Completion/SAT solver approach across diverse benchmark programs
    \item \textbf{Preference-based filtering}: Developing frameworks to rank supported models based on domain-specific criteria
    \item \textbf{Incremental computation}: Exploring algorithms for incrementally generating supported models
    \item \textbf{Hybrid approaches}: Combining supported models for some predicates with stable models for others
    \item \textbf{Integration with co-ASP}: Exploring connections between supported model semantics and coinductive ASP
\end{itemize}

We believe supported models deserve renewed attention as a practical tool for exploratory reasoning in knowledge representation applications.

\bibliographystyle{splncs04}
\bibliography{references}

\end{document}